      \theoremstyle{plain}
			\newtheorem{definition}{Definition}
			\newtheorem{lemma}{Lemma}
			\newtheorem{theorem}{Theorem}
			\newtheorem{assumption}{Assumption}
			\newtheorem{example}{Example}
			\newtheorem{corollary}{Corollary}
			\newtheorem{proposition}{Proposition}
\newcommand{\Ti}{\mathcal{T}} 
\newcommand{\St}{\mathcal{S}} 
\newcommand{\A}{\mathcal{A}} 
\newcommand{\Z}{\mathcal{Z}} 
\newcommand{\T}{\mathbb{T}} 
\newcommand{\Ob}{\mathbb{O}} 
\newcommand{\B}{\mathcal{B}} 
\newcommand{\R}{\mathbb{R}} 
\newcommand{\E}{{\mathbb{E}}} 
\DeclareMathOperator*{\argmax}{argmax}
\newcommand{\TODO}[1]{{\color{red}#1}}
\begin{document}

\title{\LARGE \bf Myopic Policy Bounds for Information Acquisition POMDPs}

\author{Mikko Lauri, Nikolay Atanasov, George J. Pappas, and Risto Ritala
\thanks{M. Lauri and R. Ritala are with Department of Automation Science and Engineering, Tampere University of Technology, P.O. Box 692, FI-33101, Tampere, Finland, {\tt\small\{mikko.lauri, risto.ritala\}@tut.fi}.}%
\thanks{N. Atanasov and G. Pappas are with GRASP Lab, University of Pennsylvania, Philadelphia, PA 19104, USA, {\tt\small\{atanasov, pappasg\}@seas.upenn.edu}. This work was supported by TerraSwarm, one of six centers of STARnet, a Semiconductor Research Corporation program sponsored by MARCO and DARPA.}%
}
\maketitle

\begin{abstract}
This paper addresses the problem of optimal control of robotic sensing systems aimed at autonomous information gathering in scenarios such as environmental monitoring, search and rescue, and surveillance and reconnaissance. The information gathering problem is formulated as a partially observable Markov decision process (POMDP) with a reward function that captures uncertainty reduction. Unlike the classical POMDP formulation, the resulting reward structure is nonlinear in the belief state and the traditional approaches do not apply directly. Instead of developing a new approximation algorithm, we show that if attention is restricted to a class of problems with certain structural properties, one can derive (often tight) upper and lower bounds on the optimal policy via an efficient myopic computation. These policy bounds can be applied in conjunction with an online branch-and-bound algorithm to accelerate the computation of the optimal policy. We obtain informative lower and upper policy bounds with low computational effort in a target tracking domain. The performance of branch-and-bounding is demonstrated and compared with exact value iteration.
\end{abstract}

\section{Introduction}
\label{sec:intro}
The current proliferation of sensors and robots has potential to transform fields as diverse as environmental monitoring, security and surveillance, localization and mapping, and structure inspection. 
One of the technical challenges in these scenarios is to control the sensors and robots to extract accurate information about various physical phenomena autonomously. 

Robotic information acquisition may be thought of as an optimization problem: given constraints on robot motion and the available sensing resources, find an optimal control policy for the robot resulting in the greatest amout of collected information. 
Assuming Markovian system dynamics and observations conditionally independent given the system state, the optimization problem is formalized as a partially observable Markov decision process (POMDP) \cite{Kaelbling1998}.  
A POMDP consists of the state, action and observation spaces, stochastic system dynamics and observation models and a real-valued reward function. 
As the true state of the robot and environment are unknown, a probability density function (pdf) known as a belief state is maintained to represent information regarding the state. 
The optimal control policy is a mapping from belief states to actions such that the expected total reward accumulated over a given optimization horizon is maximized.

\begin{figure}[t!]
\centering
\includegraphics[width=\columnwidth]{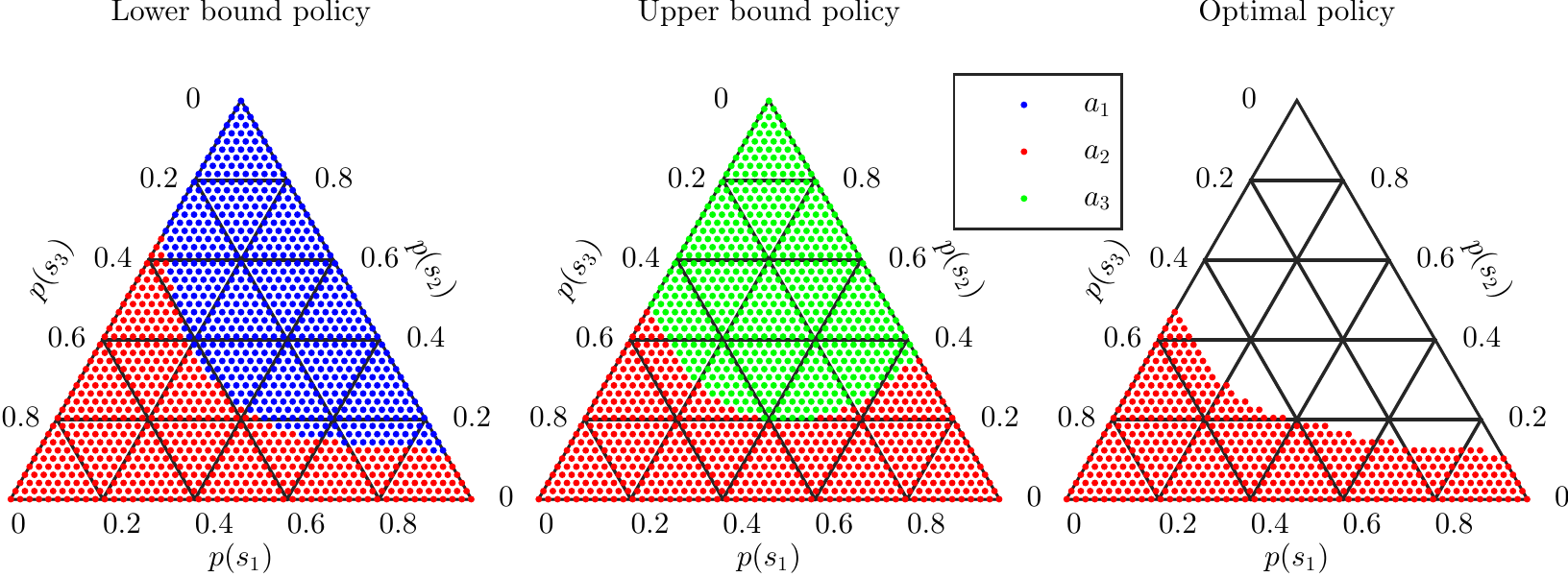}
\caption{The optimal policy $\mu^*$ in a POMDP (shown on the right) can be lower and upper bounded by myopic lower and upper bound policies $\underline{\mu}^*_1$ and $\overline{\mu}^*_1$ (shown on the left and middle, respectively). The ternary plots cover the belief simplex $\B$ in the three-state case with state space $\St = \{s_1, s_2, s_3 \}$, while the dots indicate the action selected by the policy for a particular belief point. The inequalities $\underline{\mu}^*_1(b) \leq \mu^*(b) \leq \overline{\mu}^*_1$ hold for all $b\in\B$. The optimal policy is fully determined if $\underline{\mu}^*_1(b) = \overline{\mu}^*_1(b)$. Even if the bounds do not agree they may provide information regarding the optimal action.}
\label{fig:policy_bound_example}
\end{figure}

While POMDPs offer a principled framework handling probabilistic uncertainty in both control actions and sensing outcomes, obtaining optimal policies for them is computationally hard~\cite{Madani2003}.
A wide range of algorithms for computing approximately optimal policies for POMDPs have been proposed in the literature, see e.g.\ \cite{Lovejoy1991,Hauskrecht2000,Shani2013} for reviews. 
Most research, however, focuses on a standard formulation in which the reward is a function of the hidden system state, and hence necessarily linear in the belief state. 
When it comes to information gathering applications in robotics (see e.g.\ \cite{Stachniss2005,Charrow2014,Atanasov2014,Lauri2015}), reward functions that capture uncertainty reduction (e.g.\, negative Shannon entropy~\cite{InfoTheoryBook}) or information gain (e.g., mutual information~\cite{InfoTheoryBook}) are often more interesting.
These reward function are nonlinear in the belief state, and hence, many of the existing approximation techniques for the standard POMDP formulation do not apply.
Approximation methods applicable for such information acquisition POMDPs include some online algorithms \cite{Ross2008} and specialized value iteration techniques, see e.g.\ \cite{Araya2010}.
Sampling-based online planning algorithms, such as those based on variants of Monte Carlo tree search (see e.g.\ \cite{Silver2010,Somani2013}), are also applicable.


Besides approximation algorithms, an alternative route to handling the inherent complexity of POMDPs with nonlinear belief-dependent rewards is to make structural statements about the optimal value function. 
In the case of state-dependent rewards, earlier work \cite{Lovejoy1987,Rieder1991,Krishnamurthy2015} shows that, if we restrict attention to a class of POMDPs that satisfy certain assumptions on the state and observation processes, we can bound the optimal policy from both below and above by an easily computable myopic (greedy) policy (see Fig.~\ref{fig:policy_bound_example} for details). 
The specific case of POMDP multi-armed bandits was studied in \cite{Krishnamurthy2009}.


The goal of this work is to extend these structural results to POMDPs with belief-dependent rewards. 
Krishnamurthy and Djonin~\cite{Krishnamurthy2007} determined conditions under which the optimal policy has a threshold structure with respect to the monotone likelihood ratio (MLR) order, a partial order on the belief states. 
Such partial orders on the belief states have been used in the related active hypothesis testing field~\cite{Naghshvar2013} to determine when a given sensing action dominates all other actions. 
For example, Naghshvar and Javidi~\cite{Naghshvar2010} used Blackwell ordering of the observation models to reduce an active hypothesis test with $K$ actions to a passive test with a single action. 
Compared to active hypothesis testing, information acquisition in POMDPs is even more challenging because the true underlying state evolves during the sensing process. 

Our first contribution is to extend the structural results of \cite{Krishnamurthy2015} for POMDPs with state-dependent rewards to POMDPs with nonlinear belief-dependent rewards. 
This allows us to generate upper and lower bounds on the optimal policy (which agree in large portions of the belief state space and hence determine the optimal action) via a very efficient myopic policy computation. 
Our second contribution is to apply the myopic policy bounds in conjunction with an online branch-and-bound pruning algorithm to accelerate computation of the optimal policy.
Our approach allows anytime computation with a bounded error from the optimal solution, unlike for instance sampling-based planning which is only asymptotically optimal.

The remainder of the paper is organized as follows.
Section~\ref{sec:problem} formulates the information acquisition problem as a POMDP, and discusses suitable choices of reward functions.
Section~\ref{sec:myopic} briefly reviews existing structural results. 
Section~\ref{sec:mlr_rewards} introduces our extension to belief-dependent rewards and defines the myopic lower and upper bound policies.
In Section~\ref{sec:bb}, we show how to apply the myopic policy bounds in a branch-and-bound pruning algorithm.
Section~\ref{sec:empirical} presents empirical evaluation in target tracking domains, and Section~\ref{sec:conclusion} concludes the paper.

\section{Information acquisition in POMDPs}
\label{sec:problem}
To model information acquisition problems in robotic applications we define a POMDP with a belief-dependent reward function. 
In detail, a POMDP is a tuple $\langle \Ti, \St, \A, \Z, \T, \Ob, \rho \rangle$, where $\Ti$ is a set of decision epochs and $\St = \{1, 2, \ldots, S\}$, $\A = \{1, 2, \ldots, A\}$, and  $\Z = \{1,2, \ldots, Z\}$ are the finite state, action, and observation spaces, respectively.
The function $\T:\St \times \St \times \A \to [0,1]$ is a stochastic state transition model such that $\T(s',s,a)$ is the probability of reaching state $s'\in\St$ from state $s \in \St$ after executing action $a\in\A$ and $\Ob: \Z \times \St \times \A \to [0,1]$ is a stochastic observation model such that $\Ob(z',s',a)$ is the conditional probability of observing $z'\in Z$ when state $s'\in\St$ was reached after executing action $a\in\A$. 
To simplify notation, let $\T^a := [\T(j,i,a)]_{i,j=1}^S$ be the matrix of state transition probabilities for a given action $a \in \A$, i.e.\ previous states are column-wise and next states row-wise. 
Similarly, let $\Ob^a := [\Ob(z,j,a)]$ for $z \in \Z$, $j \in \St$, and $a \in \A$.

As information about the system state is incomplete, it is modeled by a probability density function (pdf) $b(s), s\in \St$ over the system state. 
The set $\B$, called the belief space, is the set of all possible pdfs over the system state. 
Finally, $\rho:\B\times\A\to \R$ is the expected reward, which is a (nonlinear) function of the belief state $b$.

Our definition subsumes the traditional POMDP definition with a state-and-action-dependent reward function $r:\St\times\A\to\R$.
This is seen defining $\rho(b,a) := \sum_{s\in\St} r(s,a) b(s)$.


The evolution of the belief state $b$ in a POMDP is conditional on the actions and observations. When action $a \in \A$ is executed, the belief state evolves according to the state transition model $\T$. The evolution can be tracked by Bayesian filtering, which consists of iterating prediction and update steps. The prediction step revises the current belief state $b\in\B$ to the predicted belief $b^{a}\in\B$ according to
\begin{equation}
\label{eq:predicted_belief}
b^a(s') = \sum\limits_{s\in\St} \T(s',s,a) b(s).
\end{equation}
After the state transition, an observation $z'\in\Z$ is perceived and the information it provides is incorporated via the update step:
\begin{equation}
\label{eq:updated_belief}
b'(s') = \tau(b,a,z') := \frac{1}{\eta(z'\mid b,a)} \cdot \Ob(z',s',a) b^a(s'),
\end{equation}
where $\eta(z'\mid b,a) := \sum_{s \in \St} \Ob(z',s,a) b^a(s)$ is a normalization factor equal to the conditional probability of observing $z'$.

\subsection{Reward functions for information acquisition}
The goal in information acquisition is to reduce the uncertainty in the unobservable state of interest $s \in \St$. 
It is therefore desirable to reach ``peaked'' belief states that have the majority of their probability mass on a single underlying state.
In other words, actions that lead the belief state towards the vertices of the probability simplex $\B$ should be rewarded. 
Suitable measures of uncertainty are concave functions of the belief state, called uncertainty functions \cite[Sec.~14.16]{DeGroot2004}.
\begin{definition}[Uncertainty function and information gain \cite{DeGroot2004}]
An uncertainty function is a non-negative, concave function $f:\B\to\R^+$.
The information gain $\mathbf{I}_f(b, a)$ of an action $a\in\A$ in belief state $b\in\B$ is the expected reduction in the uncertainty function $f$:
\begin{equation}
\mathbf{I}_f(b, a) := f(b^a) - \E_Z[ f(\tau(b,a,Z)) ].
\end{equation}
\end{definition}
Examples of uncertainty functions include Shannon entropy, the more-general R{\'e}nyi quadratic entropy, and variants of the value of information, such as the probability of error in hypothesis testing~\cite{InfoTheoryBook}. For example, the information gain associated with Shannon entropy is known as mutual information.

For information acquisition tasks, either information gain or a negative uncertainty function may be used as the reward function in the POMDP formulation. However, as the following example shows, using an uncertainty function might be more appropriate when the magnitude of the predicted uncertainty $f(b^a)$ is significantly affected by the action choice.

\begin{example}
\label{xm:uncertainty_vs_infogain}
Consider an active localization problem~\TODO{\cite{Thrun2006}} in which a mobile robot needs to choose an appropriate action $a$ to reduce the uncertainty in the distribution $b$ of its current position $s$. Suppose that the entropy $f(b)$ of its current position distribution is $n$ bits. The robot may choose between a risky high-velocity motion $a_r$ leading to a predicted entropy of $f(b^{a_r}) = n+m$ bits or a safe low-velocity motion $a_s$ leading to $f(b^{a_s}) = n+h$ bits, where $h < m$. Suppose that in both cases, after moving, the robot receives the same amount of information, $k$ bits, from its sensor measurement $z'$. In this example, the negative uncertainty function correctly predicts that action $a_r$ is risky:
\begin{align*}
-f(\tau(b,a_r,z') ) &= -(n + m - k) \\
&< -(n + h - k) = -f(\tau(b,a_s,z) )
\end{align*}
but, perhaps surprisingly, the information gain prefers the risky action:
\[
\mathbf{I}_f(b,a_r) = m-k > h-k = \mathbf{I}_f(b,a_s).
\]
\end{example}

In the rest of the paper, we restrict attention to belief-dependent reward functions, specific to the task of information gathering, which have the following form:
\begin{equation}
\label{eq:info_reward}
\rho(b,a) := \sum_{s \in \St} r(s,a) b(s) - w_a f(b),
\end{equation}
where $f$ is an uncertainty function (preferred to information gain due to the observations in Example \ref{xm:uncertainty_vs_infogain}), $r(s,a)$ is any standard state-dependent reward function, and $w_a$ are user-specified weights that trade-off measurement rewards and state uncertainty.

\subsection{Value functions and optimal policies}
Let $H \in \mathbb{N}\cup\{\infty\}$ denote a planning horizon so that $\Ti := \{1,2,\ldots,H\}$, and let $0\leq \gamma \leq 1$\footnote{Note that $\gamma=1$ is only valid for finite $H$.} denote a discount factor determining the relative value of immediate and future rewards.
The goal in information acquisition is to choose a policy $\mu^*_k:\B \rightarrow \A$ for each $k \in \Ti$ such that the expected sum of rewards over the decisions epochs is maximized.
The sequence of optimal policies $\mu^*_k$ for $k\in\Ti$ remaining decisions can be computed via value iteration~\cite{Bertsekas1996} according to:
\begin{align}
Q_1^*(b,a) &= \rho(b,a)\\
\label{eq:value_iter}
V_k^*(b) &= \max\limits_{a\in\A} Q_k^*(b,a), \\
\label{eq:q_value}
Q_k^*(b,a) &= \rho(b,a) + \gamma \sum\limits_{z'\in\Z} \eta(z'\mid b,a) V_{k-1}^*(\tau(b,a,z')),\\
\label{eq:opt_policy}
\mu_k^*(b) &= \argmax\limits_{a\in\A} Q_k^*(b,a),
\end{align}
where $V_k^*:\B \to \R$ is the optimal value function. For infinite-horizon problems with $H\to\infty$ and bounded rewards, value iteration converges to a unique fixed point $V^*$ that satisfies the Bellman equation:
\begin{equation}
\label{eq:bellman_fixedpoint}
V^*(b) = \rho(b,a) + \gamma \sum\limits_{z'\in\Z} \eta(z'\mid b,a) V^*(\tau(b,a,z')),
\end{equation}
and the corresponding optimal infinite-horizon policy $\mu^*$ is stationary \cite{Bertsekas1996}.
As the set $\B$ of belief states over which the value iteration must be solved is uncountable, the basic scheme above rarely results in practical methods for computing optimal policies.

\section{Myopic policy bounds for POMDPs}
\label{sec:myopic}
In this section, we review existing results on myopic policy bounds for POMDPs with state-depend rewards. The idea of a myopic policy bound is that the optimal policy $\mu_k^*$ for any $k$ can be either \emph{lower or upper bounded} by the myopic policy $\mu_1^*$ (Eq.~\ref{eq:opt_policy}).
The results exploit concepts of stochastic partial orders, summarized in the following definition.

\begin{definition}[Stochastic partial orders \cite{Shaked2007}]
\label{def:stoch_orders}
Let $b_1, b_2 \in \B$.
\begin{enumerate}
\item $b_1$ first-order stochastically domainates $b_2$, denoted $b_1 \geq_s b_2$, if $\sum\limits_{i=j}^{S}b_1(i) \geq \sum\limits_{i=j}^{S} b_2(i)$ $\forall j \in \St$. 
\item Equivalently, $b_1 \geq_s b_2$ iff $\sum\limits_{i=1}^{S}b_1(i)g(i) \geq \sum\limits_{i=1}^{S}b_2(i)g(i)$ for all increasing functions $g:\St\to\R$.
\item $b_1$ is greater than $b_2$ in the monotone likelihood ratio (MLR) order, denoted $b_1 \geq_r b_2$, if $b_1(i)b_2(j) \leq b_2(i)b_1(j)$, for all $i<j$, $j\in\St$.
\item A function $g:\B\to\R$ is MLR increasing (decreasing) if $b_1 \geq_r b_2 \Rightarrow g(b_1) \geq g(b_2)$ $(g(b_1) \leq g(b_2))$.
\item MLR dominance implies first-order stochastic dominance, $b_1 \geq_r b_2 \Rightarrow b_1 \geq_s b_2$.
\end{enumerate}
\end{definition}

Given conditions on the state transition and observation models, it was shown in \cite{Lovejoy1987,Rieder1991} that the optimal value function is MLR increasing and that the optimal policy may be lower bounded by $\mu_1^*$.
The required conditions, however, were thought to be too restrictive to be of practical significance until a recent extension~\cite{Krishnamurthy2015} proposed the following revised assumptions.
\begin{assumption}[Sufficient conditions for existence of myopic policy bounds~\cite{Krishnamurthy2015}]
\label{ass:model_assumptions}
$ $
\begin{enumerate}[label=(A\arabic*)]
\item \label{asm:tp2} $\T^a$ and $\Ob^a$ are totally positive or order 2 (TP2) $\forall a\in\A$,
\item \label{asm:copositivity} For all $j \in \{1, 2, \ldots, S-1\}$, $a\in \{1, 2, \ldots, A-1\}$, and $z \in \Z$, the matrices $D^{j,a,z} := [d_{m,n}^{j,a,z} + d_{n,m}^{j,a,z}] \in \mathbb{R}^{S \times S}$ are copositive, where $d_{m,n}^{j,a,z} := \Ob^a_{z,j}\Ob^{a+1}_{z,j+1}\T^a_{m,j}\T^{a+1}_{n,j+1} - \Ob^a_{z,j+1}\Ob^{a+1}_{z,j}\T^a_{m,j+1}\T^{a+1}_{n,j}$, and
\item \label{asm:sum} $\sum\limits_{z\leq \bar{z}} \sum\limits_{j\in\St} \left[ \T^a_{i,j}\Ob^a_{z,j} - \T^{a+1}_{i,j}\Ob^{a+1}_{z,j}\right]$
$\geq 0, \forall i\in \St, \forall \bar{z} \in \Z$, $\forall a\in\{1,2,\ldots,A-1\}$.
\end{enumerate}
\end{assumption}
The TP2 property for a matrix requires that all its second-order minors are nonnegative~\cite{Fallat2011}. 
Examples of TP2 observation models include exponential family distributions such as Gaussian, exponential, gamma, and binomial pdfs~\cite{Fallat2011}; more examples may be found e.g.\ in \cite{Karlin1968}. 
While~\ref{asm:copositivity} may be checked directly~\cite{copositive_check}, it is much more efficient (but also more restrictive) to check the following:
\begin{enumerate}[leftmargin=2.5eM]
  \item[(A2')] \label{asm:copositivity_relaxed} $d_{m,n}^{j,a,z} + d_{n,m}^{j,a,z} \geq 0$ for all $m, n \in \St$, $j \in \{1, \ldots, S-1\}$, $a\in \{1, 2, \ldots, A-1\}$, and $z \in \Z$.
\end{enumerate}
Assumption~\ref{asm:sum} is a revised version of the original assumption in \cite{Lovejoy1987,Rieder1991}.
It states that for every state $i\in\St$, the prior pdf of perceiving $z\in\Z$ after ending in state $j\in\St$ after executing action $a+1$ first order stochastically dominates that of action $a$.
The assumptions lead to the following lemma.

\begin{lemma}[\cite{Krishnamurthy2015}]
\label{lem:update_monotone}
Consider a POMDP $\langle \Ti, \St, \A, \Z, \T, \Ob, \rho \rangle$. Then for all $b\in\B$, and $a', a\in\A$, $a'\geq a$,
\begin{enumerate}[label=(L\arabic*)]
\item \label{lem:sdom} if \ref{asm:sum} holds, $\eta(z'\mid b, a') \geq_s \eta(z'\mid b, a)$, and
\item \label{lem:rdom} if \ref{asm:copositivity} holds, $\tau(b,a',z') \geq_r \tau(b,a,z')$ for all $z'\in\Z$.
\end{enumerate}
\end{lemma}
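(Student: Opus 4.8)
The plan is to establish each claim for consecutive actions $a$ and $a+1$; the general case $a'\geq a$ then follows by transitivity of the orders $\geq_s$ and $\geq_r$. Throughout, the strategy is to reduce each stochastic-dominance statement to a scalar inequality that, after expanding the filtering equations \eqref{eq:predicted_belief}--\eqref{eq:updated_belief} and interchanging the order of summation, reproduces exactly the quantity controlled by the corresponding assumption.

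For \ref{lem:sdom} I would use the tail-sum (CDF) characterization of first-order dominance in Definition~\ref{def:stoch_orders}: since $\eta(\cdot\mid b,a)$ is a genuine pmf over $\Z$, the claim $\eta(\cdot\mid b,a+1)\geq_s\eta(\cdot\mid b,a)$ is equivalent to $\sum_{z\leq\bar z}[\eta(z\mid b,a)-\eta(z\mid b,a+1)]\geq 0$ for every $\bar z\in\Z$. Substituting $\eta(z\mid b,a)=\sum_j \Ob^a_{z,j}b^a(j)$ and then $b^a(j)=\sum_i \T^a_{j,i}b(i)$, I would swap the summation order to pull the nonnegative weights $b(i)$ outside. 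The remaining inner sum over $z\leq\bar z$ and $j$ is precisely the left-hand side of Assumption~\ref{asm:sum}, which is nonnegative for each $i$; since $b(i)\geq 0$, the whole expression is nonnegative. This part is essentially bookkeeping, the only care needed being the index/transpose convention of $\T^a$ and $\Ob^a$.

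For \ref{lem:rdom} the argument is the substantive one. I would first note that the MLR dominance $\tau(b,a+1,z')\geq_r\tau(b,a,z')$ need only be verified on \emph{adjacent} state pairs $(j,j+1)$, the general inequality $b_1(i)b_2(\ell)\leq b_2(i)b_1(\ell)$ for $i<\ell$ following by chaining adjacent inequalities through monotonicity of the likelihood ratio. Writing out the adjacent MLR cross-product for the updated beliefs via \eqref{eq:updated_belief}, the two normalization factors $\eta(z'\mid b,a)$ and $\eta(z'\mid b,a+1)$ cancel, leaving an inequality purely in $\Ob$, $\T$, and $b$. Expanding each predicted belief $b^a(\cdot),b^{a+1}(\cdot)$ as a sum over prior states then converts the required inequality into a double sum $\sum_{m,n}b(m)b(n)\,d^{j,a,z'}_{m,n}\geq 0$, where $d^{j,a,z'}_{m,n}$ is exactly the quantity defined in Assumption~\ref{asm:copositivity}. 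Symmetrizing this bilinear form gives $\tfrac12\,b^\top D^{j,a,z'}b$, and since $b\in\B$ has nonnegative entries, copositivity of $D^{j,a,z'}$ yields the claim.

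I expect the main obstacle to be the bilinear-form step of \ref{lem:rdom}: recognizing that the adjacent-pair MLR inequality, once the normalizers cancel and the predicted beliefs are expanded, is a quadratic form in $b$ whose kernel symmetrizes to the copositive matrix $D^{j,a,z'}$ --- this is what motivates the otherwise opaque definition of $d^{j,a,z'}_{m,n}$ and the copositivity requirement. A secondary point requiring care is justifying that checking adjacent state pairs suffices for full MLR dominance, which relies on the updated beliefs having appropriate support so that likelihood-ratio monotonicity chains across non-adjacent indices.
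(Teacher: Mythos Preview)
The paper does not supply its own proof of this lemma; it is quoted from \cite{Krishnamurthy2015} and used as a black box in the proof of Corollary~\ref{cor:mlr_policy_bounds}. There is therefore nothing in the present paper to compare your argument against.

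That said, your proposal is correct and is the standard route. For \ref{lem:sdom}, expanding $\eta(z\mid b,a)$ via \eqref{eq:predicted_belief}--\eqref{eq:updated_belief} and pulling the nonnegative weights $b(i)$ outside reduces the lower-tail inequality $\sum_{z\leq\bar z}[\eta(z\mid b,a)-\eta(z\mid b,a+1)]\geq 0$ exactly to Assumption~\ref{asm:sum}, as you say. For \ref{lem:rdom}, cancelling the normalizers in the adjacent-pair MLR cross-product and expanding each predicted belief as a sum over prior states yields $\sum_{m,n}b(m)b(n)\,d^{j,a,z'}_{m,n}$, which symmetrizes to $\tfrac12\,b^\top D^{j,a,z'}b\geq 0$ by copositivity. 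This is precisely why $D^{j,a,z}$ is defined only for $j\in\{1,\dots,S-1\}$ in \ref{asm:copositivity}: the assumption is engineered for adjacent pairs.

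The one genuine technical wrinkle is the point you already flagged: chaining adjacent-pair MLR inequalities to non-adjacent indices requires cancelling a factor $b_1(j)b_2(j)$ at each intermediate $j$, which fails if that product vanishes. In the setting of \cite{Krishnamurthy2015} this is typically handled either by observing that the offending cross-products are then trivially zero, or by invoking \ref{asm:tp2} to propagate support through the filter. You should make one of these explicit rather than leave it as a caveat.
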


The following theorem for POMDPs with rewards linear in the belief state is due to \cite[Prop.~2]{Lovejoy1987} and \cite[Theorem~1]{Krishnamurthy2015}:
\begin{theorem}
\label{thm:mlr_policy_bounds}
Consider a POMDP $\langle \Ti, \St, \A, \Z, \T, \Ob, \rho \rangle$ satisfying Assumptions~\ref{asm:tp2}-\ref{asm:sum}, with $\rho(b,a) := \sum_{s\in\St} r(s,a) b(s)$.
If $r(s,a)$ is increasing (decreasing) in $\St$ for all $a\in\A$, then the optimal value function $V_k^*$ is MLR increasing (decreasing) and the optimal policy $\mu_k^*(b)$ for any $k\in\Ti$ satisfies $\mu_1^*(b) \leq \mu_k^*(b)$ ($\mu_1^*(b) \geq \mu_k^*(b)$) $\forall b\in\B$.
\end{theorem}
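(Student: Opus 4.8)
The plan is to prove the ``increasing'' case, the ``decreasing'' case being symmetric (replace $r$ by $-r$, or reverse the relevant inequalities). The argument has two parts: first establish by induction on the horizon $k$ that $V_k^*$ is MLR increasing, and then combine this monotonicity with Lemma~\ref{lem:update_monotone} to obtain the policy bound $\mu_1^*(b)\leq\mu_k^*(b)$. For the base case, $V_1^*(b)=\max_a\rho(b,a)=\max_a\sum_s r(s,a)b(s)$: if $b_1\geq_r b_2$ then $b_1\geq_s b_2$ by item~(5) of Definition~\ref{def:stoch_orders}, and since each $r(\cdot,a)$ is increasing, item~(2) gives $\rho(b_1,a)\geq\rho(b_2,a)$ for every $a$, a relation preserved under the maximum over $a$. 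For the inductive step, assume $V_{k-1}^*$ is MLR increasing; since $V_k^*(b)=\max_a Q_k^*(b,a)$ and a maximum of MLR increasing functions is MLR increasing, it suffices to show each $Q_k^*(\cdot,a)=\rho(\cdot,a)+\gamma\,\Delta(\cdot,a)$ is MLR increasing, where the reward term is handled as in the base case and $\Delta(b,a):=\sum_{z'}\eta(z'\mid b,a)\,V_{k-1}^*(\tau(b,a,z'))$ is the look-ahead term.

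To show $\Delta(\cdot,a)$ is MLR increasing, fix $b_1\geq_r b_2$ and decompose
\begin{align*}
\Delta(b_1,a)-\Delta(b_2,a) &= \sum_{z'}\eta(z'\mid b_1,a)\bigl[V_{k-1}^*(\tau(b_1,a,z'))-V_{k-1}^*(\tau(b_2,a,z'))\bigr] \\
&\quad + \sum_{z'}\bigl[\eta(z'\mid b_1,a)-\eta(z'\mid b_2,a)\bigr]\,V_{k-1}^*(\tau(b_2,a,z')).
\end{align*}
The first sum is nonnegative because, under~\ref{asm:tp2}, the Bayesian filter $\tau(\cdot,a,z')$ is MLR increasing in its belief argument, so $\tau(b_1,a,z')\geq_r\tau(b_2,a,z')$, and the induction hypothesis makes $V_{k-1}^*$ MLR increasing. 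The second sum is nonnegative because~\ref{asm:tp2} also makes $\tau(b,a,\cdot)$ MLR increasing in the observation, so $z'\mapsto V_{k-1}^*(\tau(b_2,a,z'))$ is increasing, while the predictive law $\eta(\cdot\mid b,a)$ is first-order stochastically increasing in $b$; applying item~(2) of Definition~\ref{def:stoch_orders} gives nonnegativity. Here I invoke the filter/observation monotonicities as the standard TP2 consequences of~\ref{asm:tp2}; they are the belief- and observation-variable analogues of Lemma~\ref{lem:update_monotone}. This closes the induction, so $V_k^*$ is MLR increasing for every $k$.

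For the policy bound I reuse the same decomposition, but now in the action argument, leaning directly on Lemma~\ref{lem:update_monotone}. I first show $\Delta(b,\cdot)$ is increasing in $a$: for $a'\geq a$,
\begin{align*}
\Delta(b,a')-\Delta(b,a) &= \sum_{z'}\eta(z'\mid b,a')\bigl[V_{k-1}^*(\tau(b,a',z'))-V_{k-1}^*(\tau(b,a,z'))\bigr] \\
&\quad + \sum_{z'}\bigl[\eta(z'\mid b,a')-\eta(z'\mid b,a)\bigr]\,V_{k-1}^*(\tau(b,a,z')).
\end{align*}
The first sum is $\geq 0$ by~\ref{lem:rdom} ($\tau(b,a',z')\geq_r\tau(b,a,z')$) and the MLR monotonicity of $V_{k-1}^*$ just proved; the second is $\geq 0$ by~\ref{lem:sdom} ($\eta(\cdot\mid b,a')\geq_s\eta(\cdot\mid b,a)$) together with $z'\mapsto V_{k-1}^*(\tau(b,a,z'))$ increasing. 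Hence $\Delta(b,\cdot)$ is increasing in $a$. To conclude, let $a_1:=\mu_1^*(b)=\argmax_a\rho(b,a)$. For any $a<a_1$, myopic optimality gives $\rho(b,a_1)\geq\rho(b,a)$ and monotonicity gives $\Delta(b,a_1)\geq\Delta(b,a)$, so $Q_k^*(b,a_1)\geq Q_k^*(b,a)$. Thus no action below $a_1$ can beat $a_1$ at horizon $k$, and with the argmax resolving ties to the largest maximizer we obtain $\mu_k^*(b)\geq a_1=\mu_1^*(b)$.

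The main obstacle is the two-term decomposition in the inductive step, where both the integrand $V_{k-1}^*\!\circ\tau$ and the weighting measure $\eta$ shift with the comparison variable, so a term-by-term comparison fails outright. Controlling the second sum forces one to prove that the look-ahead value is monotone in the observation $z'$, so that first-order stochastic dominance of the predictive law $\eta$ can be brought to bear. Aligning this observation-monotonicity with the stochastic dominance of $\eta$ is the delicate point, and it is precisely where the TP2 structure~\ref{asm:tp2} (for the value-function induction) and Lemma~\ref{lem:update_monotone} (for the policy bound) are indispensable.
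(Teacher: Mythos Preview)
Your proof is correct and follows essentially the same route as the paper. The paper does not actually prove Theorem~\ref{thm:mlr_policy_bounds} itself but cites \cite[Prop.~2]{Lovejoy1987} and \cite[Theorem~1]{Krishnamurthy2015}; the closest in-paper argument is the proof of Corollary~\ref{cor:mlr_policy_bounds}, whose induction for MLR monotonicity of $V_k^*$ uses the same two-term comparison you wrote (just ordered as a chain of inequalities rather than as an add-and-subtract decomposition), and which then invokes \cite[Prop.~2]{Lovejoy1987} for the policy bound where you instead spell out the $\Delta(b,\cdot)$-increasing-in-$a$ argument directly.
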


\section{Transformed MLR monotone rewards for information acquisition POMDPs}
\label{sec:mlr_rewards}
In this section, we extend Theorem~\ref{thm:mlr_policy_bounds} to apply to reward functions of the form in~\eqref{eq:info_reward}, which are of interest for information acquisition tasks. To proceed, we need the following minor extension of Theorem~\ref{thm:mlr_policy_bounds}.
\begin{corollary}
\label{cor:mlr_policy_bounds}
Consider a POMDP $\langle \Ti, \St, \A, \Z, \T, \Ob, \rho \rangle$ satisfying Assumptions~\ref{asm:tp2}-\ref{asm:sum} and suppose that $\rho$ is MLR increasing (decreasing) in $\B$ for all $a\in\A$. Then, the optimal policy $\mu_k^*(b)$ for any $k\in\Ti$ satisfies $\mu_1^*(b) \leq \mu_k^*(b)$ ($\mu_1^*(b) \geq \mu_k^*(b)$) $\forall b\in\B$.
\end{corollary}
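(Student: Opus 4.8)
The plan is to follow the proof of Theorem~\ref{thm:mlr_policy_bounds} essentially verbatim, isolating the single place where the linear form $\rho(b,a)=\sum_{s}r(s,a)b(s)$ with $r(\cdot,a)$ increasing is invoked and replacing it with the present hypothesis that $\rho(\cdot,a)$ is MLR increasing. In Theorem~\ref{thm:mlr_policy_bounds} the monotonicity of $r$ is used only to deduce, via Definition~\ref{def:stoch_orders}(2) and (5), that $b\mapsto\rho(b,a)$ is MLR increasing; every subsequent step operates on that consequence alone. I would therefore first establish by backward induction on $k$ that the optimal value function $V_k^*$ is MLR increasing on $\B$. For the base case, $V_1^*(b)=\max_{a\in\A}\rho(b,a)$ is MLR increasing because each $\rho(\cdot,a)$ is MLR increasing by assumption and the pointwise maximum of MLR-increasing functions is MLR increasing (if $b_1\geq_r b_2$ and $a^*$ maximizes $\rho(b_2,\cdot)$, then $\max_a\rho(b_1,a)\geq\rho(b_1,a^*)\geq\rho(b_2,a^*)=\max_a\rho(b_2,a)$).

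For the inductive step I would assume $V_{k-1}^*$ is MLR increasing and show that $Q_k^*(\cdot,a)$ is MLR increasing for each $a$, so that $V_k^*=\max_a Q_k^*(\cdot,a)$ is MLR increasing by the same maximum argument. Writing $Q_k^*(b,a)=\rho(b,a)+\gamma\sum_{z'}\eta(z'\mid b,a)V_{k-1}^*(\tau(b,a,z'))$, the first summand is MLR increasing by hypothesis. The future-reward term is MLR increasing in $b$ by exactly the argument underlying Theorem~\ref{thm:mlr_policy_bounds}: Assumption~\ref{asm:tp2} (TP2 of $\Ob^a$ and $\T^a$) ensures that the Bayesian update preserves the MLR order in its belief argument and that $z'\mapsto\tau(b,a,z')$ is MLR increasing, so composing with the MLR-increasing $V_{k-1}^*$ and taking the $\eta$-expectation preserves MLR monotonicity. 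Since this machinery never touches the particular form of $\rho$, it transfers unchanged.

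The policy bound then follows from a monotone-comparative-statics argument. I would show that $W_k(b,a):=\sum_{z'}\eta(z'\mid b,a)V_{k-1}^*(\tau(b,a,z'))$ is increasing in $a$ for fixed $b$. Fixing $a'\geq a$, part~\ref{lem:rdom} of Lemma~\ref{lem:update_monotone} gives $\tau(b,a',z')\geq_r\tau(b,a,z')$, hence $V_{k-1}^*(\tau(b,a',z'))\geq V_{k-1}^*(\tau(b,a,z'))$ pointwise in $z'$ because $V_{k-1}^*$ is MLR increasing; and since $z'\mapsto V_{k-1}^*(\tau(b,a,z'))$ is increasing (again via Assumption~\ref{asm:tp2}) while part~\ref{lem:sdom} gives $\eta(\cdot\mid b,a')\geq_s\eta(\cdot\mid b,a)$, Definition~\ref{def:stoch_orders}(2) lets me compare the two $\eta$-expectations. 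Chaining these inequalities yields $W_k(b,a')\geq W_k(b,a)$. Finally, because $Q_k^*(b,a)=Q_1^*(b,a)+\gamma W_k(b,a)$ with $Q_1^*=\rho$ and $W_k(b,\cdot)$ nondecreasing, adding a nondecreasing function of $a$ cannot lower the maximizer, so $\mu_k^*(b)=\argmax_{a}Q_k^*(b,a)\geq\argmax_{a}Q_1^*(b,a)=\mu_1^*(b)$ for all $b$, under the usual largest-maximizer tie-breaking. The decreasing case is symmetric.

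I expect the difficulty to be bookkeeping rather than a genuine obstacle: the whole content of the corollary is that linearity of $\rho$ is used nowhere beyond producing an MLR-increasing reward, so the substantive facts (MLR monotonicity of the value function and the comparative statics in $a$) are inherited intact from Theorem~\ref{thm:mlr_policy_bounds} and Lemma~\ref{lem:update_monotone}. The one point I would verify explicitly is that the monotonicity of $z'\mapsto V_{k-1}^*(\tau(b,a,z'))$, needed for the stochastic-dominance step, follows purely from Assumption~\ref{asm:tp2} together with the induction hypothesis and is independent of the reward's form.
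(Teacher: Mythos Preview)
Your proposal is correct and matches the paper's approach: both reduce the claim to showing that $V_k^*$ is MLR increasing by induction on $k$, with the base case handled by the pointwise-maximum argument and the inductive step by the TP2-based monotonicity of the Bayesian update together with the induction hypothesis. The only cosmetic difference is that the paper invokes \cite[Prop.~2]{Lovejoy1987} as a black box to pass from ``$V_k^*$ is MLR increasing'' to the policy bound $\mu_1^*(b)\leq\mu_k^*(b)$, whereas you unpack that step explicitly via the comparative-statics argument on $W_k(b,a)$ using Lemma~\ref{lem:update_monotone}\ref{lem:sdom}--\ref{lem:rdom}; this is the same argument, just written out rather than cited.
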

\begin{proof}
By \cite[Prop.~2]{Lovejoy1987}, it suffices to show $V_k^*$ is MLR increasing.
We proceed by induction.
For the base case $k=1$, $V_1^*(b) = \max\limits_{a\in\A} \rho(b,a)$ is clearly MLR increasing.
For the induction step, suppose the claim holds for $V_{k-1}^*$. 
Consider the sum part of the value iteration in Eq.~\eqref{eq:q_value}.
Let $b_1, b_2\in\B$ s.t.\ $b_1 \geq_r b_2$.
Now
\begin{equation}
\begin{split}
&\sum\limits_{z'=1}^Z \eta(z'\mid b_1,a) V_{k-1}^*(\tau(b_1,a,z')) \\
\geq &\sum\limits_{z'=1}^Z \eta(z'\mid b_2,a) V_{k-1}^*(\tau(b_1,a,z'))\\
\geq &\sum\limits_{z'=1}^Z \eta(z'\mid b_2,a) V_{k-1}^*(\tau(b_2,a,z')).
\end{split}
\end{equation}
The first inequality follows by Lemma~\ref{lem:sdom}, \cite[Lemma~1.2(2),1.3(1)]{Lovejoy1987} which state $\tau(b,a,z')$ is MLR increasing in $z'$, and the the induction hypothesis.
The second inequality follows by Lemma~\ref{lem:rdom} and the induction hypothesis.
The proof for the MLR decreasing part is similar and omitted.
\end{proof}

Following the observation in Corollary~\ref{cor:mlr_policy_bounds}, the main idea is to transform the reward function $\rho(b,a)$~\eqref{eq:info_reward} to one that is MLR monotone but leaves the optimal policy unchanged. Accordingly, we define two transformed reward functions:
\begin{equation}
\label{eq:transformed_reward}
\begin{split}
\underline{\rho}(b,a) &= \rho(b,a) + [(I-\gamma(\T^a)^T)g]^T b\\
\overline{\rho}(b,a) &= \rho(b,a) + [(I-\gamma(\T^a)^T)h]^T b,
\end{split}
\end{equation}
where $g,h \in \mathbb{R}^S$ are parameter vectors chosen so that $\underline{\rho}(b,a)$ is MLR increasing and $\overline{\rho}(b,a)$ is MLR decreasing in $\B$.

Note that, by construction the transformed rewards in \eqref{eq:transformed_reward} leave the optimal policy unaffected. This can be verified by plugging either one into the Bellman equation \eqref{eq:bellman_fixedpoint}. The corresponding infinite horizon value functions are $\underline{V}^*(b) = V^*(b) + g^T b$, $\overline{V}^*(b) = V^*(b) + h^T b$, and $\underline{\mu}^* \equiv \mu^* \equiv \overline{\mu}^*$~\cite{Krishnamurthy2015}. The following theorem gives necessary conditions for the existence of vectors $g$ and $h$ at a single belief state.

\begin{theorem}
\label{thm:mlr_LP}
Let $b_0\in\B$. If there exists a $g\in\R^S$ such that the set of linear constraints given by
\begin{equation}
\label{eq:mlr_LP}
\left.\frac{\partial \rho(b,a)}{\partial b}\right|_{b=b_0}K + [(I-\gamma(\T^a)^T)g]^T K \geq 0 \quad \forall a \in \A
\end{equation}
where $K$ is an $S$-by-$S$ matrix with entries
\begin{equation}
k_{ij} = \begin{cases}
1 & \text{if } i=j\\
-1 & \text{if } j = i+1
\end{cases},
\end{equation}
are satisfied, then $\underline{\rho}(b,a)$ as defined in Eq.~\eqref{eq:transformed_reward} is MLR increasing at $b_0$. 
\end{theorem}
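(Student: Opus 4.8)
The plan is to reduce the (local) MLR-monotonicity of $\underline{\rho}(\cdot,a)$ at $b_0$ to a condition on its gradient, and then to recognize that the linear constraints \eqref{eq:mlr_LP} are exactly the requirement that this gradient be nonnegative and nondecreasing in the state index. Since $\underline{\rho}(b,a)$ differs from $\rho(b,a)$ only by the term $[(I-\gamma(\T^a)^T)g]^T b$, which is linear in $b$, for each fixed $a$ the gradient of $\underline{\rho}(\cdot,a)$ at $b_0$, viewed as a row vector, is
\begin{equation}
c := \left.\frac{\partial \rho(b,a)}{\partial b}\right|_{b=b_0} + [(I-\gamma(\T^a)^T)g]^T \in \R^{1\times S},
\end{equation}
so the left-hand side of \eqref{eq:mlr_LP} is precisely $cK$. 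First I would decode $cK\geq 0$ through the columns of $K$: column $1$ equals $e_1$ and column $s$ equals $e_s-e_{s-1}$ for $s\geq 2$, so $cK\geq 0$ is equivalent to $c_1\geq 0$ together with $c_s-c_{s-1}\geq 0$ for $s=2,\dots,S$, i.e.\ the entries of $c$ are nonnegative and nondecreasing.

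Next I would make precise the local notion of MLR monotonicity. Since $\rho(\cdot,a)$ is differentiable at $b_0$, the statement ``$\underline{\rho}(\cdot,a)$ is MLR increasing at $b_0$'' means that the directional derivative $c\cdot\delta$ is nonnegative along every tangent direction $\delta$ for which $b_0+\epsilon\delta\geq_r b_0$ for small $\epsilon>0$. The key simplification is to pass from the (belief-dependent) MLR tangent cone to the larger, $b_0$-independent first-order stochastic dominance cone: by Definition~\ref{def:stoch_orders}(5), $\geq_r$ implies $\geq_s$, so every MLR-increasing direction is also an FOSD-increasing direction. Hence it suffices to prove $c\cdot\delta\geq 0$ for all $\delta$ with $\sum_{s}\delta_s=0$ (tangency to $\B$) and $D_j:=\sum_{i\geq j}\delta_i\geq 0$ for all $j$ (the dominance condition of Definition~\ref{def:stoch_orders}(1), with $D_1=0$).

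Then I would close the argument by summation by parts. Writing $\delta_s=D_s-D_{s+1}$ with $D_{S+1}:=0$ and using $D_1=0$ gives
\begin{equation}
c\cdot\delta = \sum_{s=1}^S c_s(D_s-D_{s+1}) = \sum_{s=2}^S (c_s-c_{s-1})\,D_s .
\end{equation}
Each factor $c_s-c_{s-1}$ is nonnegative by the first step and each $D_s$ is nonnegative for an FOSD-increasing direction, so $c\cdot\delta\geq 0$; hence $\underline{\rho}(\cdot,a)$ is MLR increasing at $b_0$. Since \eqref{eq:mlr_LP} is imposed for every $a\in\A$, the conclusion holds for all actions, completing the proof.

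The main obstacle is getting the local characterization right, rather than any heavy computation: one must recognize that $K$ encodes the summation-by-parts operator dual to the FOSD tangent cone, and that the intended reading of ``MLR increasing at $b_0$'' is the directional-derivative condition on $c$. Relatedly, one must justify that replacing the $b_0$-dependent MLR cone by the coarser, $b_0$-independent FOSD cone is legitimate and only yields a \emph{sufficient} condition for MLR monotonicity --- this is exactly where the implication $\geq_r\Rightarrow\geq_s$ does the real work --- and one should note that at boundary points of $\B$ fewer directions are admissible, which only makes the condition easier to satisfy.
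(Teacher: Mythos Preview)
Your proof is correct and follows essentially the same approach as the paper: both use the implication $\geq_r\Rightarrow\geq_s$ to reduce MLR monotonicity to first-order stochastic monotonicity, and then recognize that the condition $cK\geq 0$ is precisely nonnegativity of the gradient in the tail-sum coordinates. The only cosmetic difference is that the paper reparametrizes via $b=K\delta$ and applies the chain rule, whereas you work directly with directional derivatives and summation by parts---these are two ways of writing the same computation.
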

\begin{proof}
As in the supplementary material of \cite{Krishnamurthy2015}, let $\Delta = \{\delta \in \mathbb{R}^S: 1=\delta(1)\geq \delta(2) \geq \ldots \geq \delta(S) \}$.
Any belief state $b\in\B$ may be represented as $\delta = Kb \in \Delta$.
Let $b_1 = K\delta_1, b_2 = K\delta_2$ such that $b_1 \geq_r b_2$.
By Definition~\ref{def:stoch_orders}, $b_1 \geq_s b_2$, and the equivalent partial order on $\Delta \subset \mathbb{R}^S$ is the component-wise partial order between $\delta_1$ and $\delta_2$.
Then, the function $\underline{\rho}(K\delta,a)$ is increasing on $\Delta$ if its partial derivatives are non-negative:
\begin{equation}
\begin{split}
\frac{\partial \underline{\rho}(K\delta,a)}{\partial \delta} &= \frac{\partial \rho(K\delta,a)}{\partial \delta} + \frac{\partial [(I-\gamma(\T^a)^T)g]^T K\delta}{\partial \delta}\\
&=\frac{\partial \rho(b,a)}{\partial b}K + [(I-\gamma(\T^a)^T)g]^T K \geq 0
\end{split}
\end{equation}
and the claim follows by evaluating the derivative at $b_0$.
\end{proof}
The proof for $\overline{\rho}(b,a)$ being MLR decreasing is obtained similarly, by swapping the required sign for the partial derivatives. 
Practically, the condition of the theorem may be checked by solving a feasibility linear program (LP) with $(S-1)\cdot A$ constraints.
If a single $g$ and $h$ may be chosen such that the transformed rewards are MLR monotone for \emph{all} $b\in\B$, then the myopic policies
\begin{equation}
\begin{split}
\underline{\mu}_1^*(b) &= \argmax_{a\in\A} \underline{\rho}(b,a)\\
\overline{\mu}_1^*(b) &= \argmax_{a\in\A} \overline{\rho}(b,a)
\end{split}
\end{equation}
satisfy $\underline{\mu}_1^*(b) \leq \mu^*(b) \leq \overline{\mu}_1^*(b)$ for all $b\in\B$.
This follows from Corollary~\ref{cor:mlr_policy_bounds}, and the fact that the optimal stationary policy is unaffected by applying the transformed rewards.
It is imporant to note that the preceding inequalities do \emph{not} hold anymore for finite horizon optimal policies.

Consider the following two examples of nonlinear belief-dependent reward functions obtained by applying different uncertainty functions in Eq.~\eqref{eq:info_reward}. In both examples, we denote by $r_a$ the column vector formed from the elements of $r(s,a)$ for all $s\in\St$.

\begin{example}[Shannon Entropy]
Let the uncertainty function in~\eqref{eq:info_reward} be the Shannon entropy:
\[
f(b) := -\sum\limits_{i=1}^S b(i)\log b(i).
\]
The derivative of $\rho(b,a)$ for the constraints in Thm.~\ref{thm:mlr_LP} is:
\[
\frac{\partial \rho(b,a)}{\partial b} = r_a^T + w_a\left[\begin{matrix} 1+\log(b(1)), & \ldots & 1+\log(b(S))\end{matrix}\right].
\]
As $\log(0)$ is not well defined, we consider the inner belief simplex $\B_\epsilon := \{b\in\B \mid  \forall i :  b(i) \geq \epsilon\}$ for $\epsilon \in (0,1)$.
Note that 
\[
\log(\epsilon) \leq \log(b(i+1)) - \log(b(i)) \leq -\log(\epsilon).
\]
For $i = 1,\ldots,S-1$ and $\forall a\in\A$, the $i$th constraint in Eq.~\eqref{eq:mlr_LP} can then be bounded by a belief-independent quantity:
\[
\phi^a_{i} - \phi^a_{i+1} \geq w_a \left( \log(b(i+1) - \log(b(i)) \right) \geq -w_a\log(\epsilon),
\]
where $\phi^a := r_a +(I-\gamma(\T^a)^T)g$.
\end{example}
\begin{example}[R{\'e}nyi Quadratic Entropy]
\label{xm:quadratic_entropy}
Let the uncertainty function in~\eqref{eq:info_reward} be the R{\'e}nyi quadratic entropy:
\[
f(b) := -\log \sum_{i=1}^S b^2(i).
\]
The derivative of $\rho(b,a)$ for the constraints in Thm. \ref{thm:mlr_LP} is:
\[
\frac{\partial \rho(b,a)}{\partial b} = r_a^T + \frac{2 w_a}{\sum_{i=1}^S b^2(i)} b^T.
\]
Note that:
\[
-1 \leq b(i+1) - b(i) \leq 1 \qquad \text{and} \qquad \frac{1}{S} \leq \sum_{i=1}^S b^2(i) \leq 1,
\]
which can be used to obtain belief-independent bounds in \eqref{eq:mlr_LP}. In particular, the $i$th constraint in \eqref{eq:mlr_LP} may be bounded by a belief-independent quantity:
\[
\phi^a_i - \phi^a_{i+1} \geq \frac{2 w_a}{\sum_{j=1}^S b^2(j)} (b(i+1) - b(i))\geq -2 w_a,
\]
where $\phi^a := r_a +(I-\gamma(\T^a)^T)g$ and $i = 1,\ldots,S-1$.
\end{example}
In both examples, belief independent bounds for finding $h$ are obtained with straightforward changes.
For Shannon entropy, one ideally wants to choose $\epsilon$ as small as possible to cover the greatest possible subset of the belief space.
On the other hand, the smaller $\epsilon$ is, the more restrictive the constraints become, making it more unlikely to find feasible solutions.
The issue is resolved by considering a different quantification of uncertainty, such as the R{\'e}nyi quadratic entropy.

\begin{figure}[t!]
\centering
\includegraphics[width=\columnwidth]{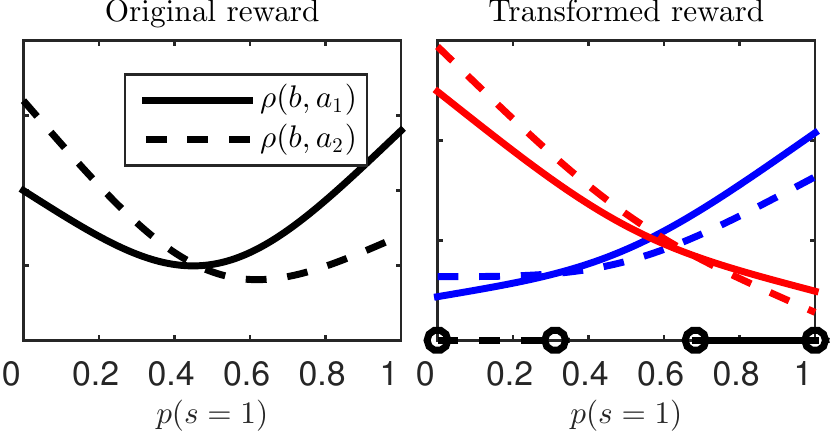}
\caption{Expected original (left axes) and transformed (right axes) rewards as function of the belief state according to Example~\ref{xm:quadratic_entropy}.
Rewards for action $a_1$ are shown by a solid line and for action $a_2$ by a dashed line.
The increasing blue curves in the right axes depict the MLR increasing rewards $\underline{\rho}(b,a)$, and the decreasing red curves depict the MLR decreasing rewards $\overline{\rho}(b,a)$.
The two regions on the right axes indicated by lines starting and ending with circle markers are the parts of the belief space where the optimal policy is fully determined by the myopic policy bounds.
The dashed region corresponds to $\underline{\mu}^*_1(b) = \overline{\mu}^*_1(b) = a_2$ and the solid region to $\underline{\mu}^*_1(b) = \overline{\mu}^*_1(b) = a_1$.}
\label{fig:reward_example}
\end{figure}

Figure~\ref{fig:reward_example} shows an example of the transformation of R{\'e}nyi quadratic entropy rewards and the associated monotonicity properties as discussed in Example~\ref{xm:quadratic_entropy}.
The left axes show the original expected reward as function of the belief state, and the right axes show the transformed MLR increasing and decreasing rewards, respectively.
For $S=2$, the belief space $\B$ can be represented by $[0,1]\subset \mathbb{R}$ and MLR monotonicity is equivalent to the familiar notion of monotonicity on $\mathbb{R}$.
On the bottom of the right axes, two regions in $\B$ are indicated by a dashed and a solid line.
These are regions where the myopic policy bounds agree and the optimal policy is thus fully determined.

\section{Myopic bounds for branch-and-bound pruning}
\label{sec:bb}
The significance of Theorems~\ref{thm:mlr_policy_bounds} and~\ref{thm:mlr_LP} is that they give a prescription for constructing policy bounds $\underline{\mu}_1^*(b) \leq \mu^*(b) \leq \overline{\mu}_1^*(b)$ for all $b\in\B$. In this section, we show that the bounds can be used in an online branch-and-bound scheme to accelerate the computation of the optimal policy for POMDPs that satisfy Assumption~\ref{ass:model_assumptions}.

Branch-and-bound pruning for POMDPs is based on a tree search over the set of belief states reachable by finite-length action and observation sequences, see e.g.\ \cite{Ross2008}. The search is started from the current belief state $b_0$ at the root of the search tree.
The tree branches due to action choices and observation possibilities, so that each node corresponds to a belief state $b$ reachable from $b_0$ via the prediction and update iteration in~\eqref{eq:predicted_belief},~\eqref{eq:updated_belief}. The standard branch-and-bound algorithm works with lower and upper bounds on the optimal action \emph{values}, i.e., $\underline{Q}_k(b,a) \leq Q^*_k(b,a) \leq \overline{Q}_k(b,a)$. Starting from node $b$, the subtree corresponding to action $a$ may be pruned if it is known to be suboptimal (i.e, the upper bound $\overline{Q}_k(b,a)$ is lower than the lower bound $\underline{Q}_k(b,\hat{a})$ for some other action $\hat{a}$), thus saving computational resources.

Following the intuition of the branch-and-bound scheme above, we design a new algorithm (Alg.~\ref{alg:bb}), which employs our policy bounds, \emph{instead of} bounds on the optimal action values. Given a belief state $b$ and a search depth $d>0$, the algorithm constructs a search tree over reachable belief states, storing the estimate of the value $\hat{Q}(b,a)$ for each visited belief state.
Only actions within the myopic lower and upper bound policies are considered.

\begin{algorithm}[ht]
\caption{Branch-and-bound with myopic policy bounds.}          
\label{alg:bb}                           
\begin{algorithmic}[1]
\Require Belief state $b$, search depth $d$
\Ensure Estimate of the value of the best action at $b$
\Function{Search}{$b$, $d$}
\If {$d = 1$} \Return $\max\limits_{a}\rho(b,a)$ \EndIf
\ForAll{$a \in \A \cap [\underline{\mu}^*_1(b), \overline{\mu}^*_1(b)]$} 
\State $\hat{Q}(b,a) \gets \rho(b,a) +\gamma \sum\limits_{z' \in \Z}\biggl(\eta(z'\mid b,a)$
\State \hfill$\times \text{\Call{Search}{$\tau(b,a,z')$, $d-1$}}\biggr)$
\EndFor
\State \Return $\max\limits_{a} \hat{Q}(b,a)$
\EndFunction
\end{algorithmic}
\end{algorithm}
The next proposition follows from \cite[Theorem~3.1]{Ross2008}.
\begin{proposition}
After Alg.~\ref{alg:bb} terminates, its action recommendation $\hat{a}$ is guaranteed to coincide with the optimal action $a^*:=\mu^*(b)$ if the myopic lower and upper bounds agree at $b$, i.e., if $\underline{\mu}^*_1(b) = \overline{\mu}^*_1(b)$. Otherwise, the suboptimality of $\hat{a}$ is of order $\gamma^d$, i.e., $|Q^*(b,\hat{a}) - Q^*(b,a^*)| \sim o(\gamma^d)$. 
\end{proposition}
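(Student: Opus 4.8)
The plan is to treat the two assertions separately. The first one---that $\hat{a}=a^*$ whenever the bounds agree at $b$---is essentially immediate from the policy sandwich $\underline{\mu}^*_1(b)\le \mu^*(b)\le \overline{\mu}^*_1(b)$ guaranteed by Corollary~\ref{cor:mlr_policy_bounds} together with the reward transformation of Eq.~\eqref{eq:transformed_reward}. If $\underline{\mu}^*_1(b)=\overline{\mu}^*_1(b)$, then $\mu^*(b)$ is squeezed to this common value, so the candidate set $R(b):=\A\cap[\underline{\mu}^*_1(b),\overline{\mu}^*_1(b)]$ over which Alg.~\ref{alg:bb} maximizes reduces to the singleton $\{a^*\}$, and the $\argmax$ in the last line can only return $a^*$. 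No error analysis is needed for this case.

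For the general (disagreement) case I would recast Alg.~\ref{alg:bb} as a truncated iteration of a \emph{restricted} Bellman operator. Define, for any bounded $U:\B\to\R$,
\[
(\tilde{T}U)(b):=\max_{a\in R(b)}\Big[\rho(b,a)+\gamma\sum_{z'\in\Z}\eta(z'\mid b,a)\,U(\tau(b,a,z'))\Big].
\]
The decisive observation is that $R(b')$ contains the infinite-horizon optimal action $\mu^*(b')$ at \emph{every} belief $b'$, again by the sandwich of Corollary~\ref{cor:mlr_policy_bounds}. Hence restricting the maximization from $\A$ to $R(b')$ never discards the maximizer of the ordinary Bellman backup, so $\tilde{T}V^*=V^*$; that is, the true value function remains a fixed point of $\tilde{T}$. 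Since $R(b)$ depends only on $b$ and not on $U$, the operator $\tilde{T}$ is a $\gamma$-contraction in the sup-norm by the standard argument, with $V^*$ as its unique fixed point.

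Writing $\tilde{V}_d:=\textsc{Search}(\cdot,d)$, the recursion of Alg.~\ref{alg:bb} is exactly $\tilde{V}_d=\tilde{T}\tilde{V}_{d-1}$ with base case $\tilde{V}_1(b)=\max_a\rho(b,a)=:h(b)$, so $\tilde{V}_d=\tilde{T}^{\,d-1}h$. Boundedness of the rewards gives $\|h-V^*\|_\infty\le C$ for a finite constant $C$, and the contraction then yields $\|\tilde{V}_d-V^*\|_\infty\le \gamma^{\,d-1}\|h-V^*\|_\infty=O(\gamma^d)$. To pass from value error to action suboptimality I would invoke the standard $\argmax$-perturbation estimate: with $\hat{Q}(b,a)=\rho(b,a)+\gamma\sum_{z'}\eta(z'\mid b,a)\tilde{V}_{d-1}(\tau(b,a,z'))$ one has $|\hat{Q}(b,a)-Q^*(b,a)|\le \gamma\|\tilde{V}_{d-1}-V^*\|_\infty$ uniformly in $a$. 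Decomposing
\begin{align*}
Q^*(b,a^*)-Q^*(b,\hat{a})
&=\big[Q^*(b,a^*)-\hat{Q}(b,a^*)\big]\\
&\quad+\big[\hat{Q}(b,a^*)-\hat{Q}(b,\hat{a})\big]\\
&\quad+\big[\hat{Q}(b,\hat{a})-Q^*(b,\hat{a})\big],
\end{align*}
the middle bracket is $\le 0$ because $\hat{a}$ maximizes $\hat{Q}$ over $R(b)$ and $a^*\in R(b)$, while the two outer brackets are each at most $\gamma\|\tilde{V}_{d-1}-V^*\|_\infty$. This bounds the suboptimality by $2\gamma\|\tilde{V}_{d-1}-V^*\|_\infty=O(\gamma^d)$, as claimed.

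The step I expect to be the crux is establishing $\tilde{T}V^*=V^*$, since it rests on the policy sandwich holding at \emph{every} reachable belief and specifically for the \emph{stationary} optimal policy. The text already warns that these bounds fail for finite-horizon optimizers, so it is essential that the entire error analysis be anchored to the infinite-horizon $V^*$ and $Q^*$ rather than to the depth-$d$ optimal values; otherwise the candidate restriction $R(b')$ could in principle exclude a depth-$d$ maximizer. Care is also needed to note that the depth-$1$ truncation uses the \emph{unrestricted} immediate-reward maximum merely as a bounded initialization $h$, which does not affect the fixed-point argument---the restriction enters only through $\tilde{T}$ at internal nodes. The remaining contraction and $\argmax$-perturbation estimates are routine and parallel \cite[Theorem~3.1]{Ross2008}.
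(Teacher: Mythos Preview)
Your argument is correct and in fact supplies the details the paper omits: the paper does not give a proof at all, but simply states that the proposition ``follows from \cite[Theorem~3.1]{Ross2008}.'' Your restricted-Bellman-operator framing, and in particular the observation that $\tilde{T}V^*=V^*$ because the policy sandwich of Corollary~\ref{cor:mlr_policy_bounds} guarantees $\mu^*(b')\in R(b')$ at every reachable belief, is precisely the ingredient needed to adapt the standard depth-limited-search bound to the pruned tree. The remaining contraction and $\argmax$-perturbation steps are routine, as you note. One minor remark: you obtain an $O(\gamma^d)$ bound, whereas the proposition as stated writes $o(\gamma^d)$; the latter appears to be loose notation in the paper rather than a genuinely stronger claim, so your bound is the intended one.
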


\section{Application to Target Tracking}
\label{sec:empirical}
We evaluate the performance of Algorithm~\ref{alg:bb} in a target tracking application. Let $s \in \St$ model the state of a physical process of interest such as the position of an adversarial aircraft or certain environment conditions (e.g., temperature, chemical concentration, air pollution, etc.). Let $s=1$ correspond to an innocuous target state (e.g., the aircraft poses no danger or the environment levels are normal) and, at the other extreme, let $s=S$ correspond to a red-alert state in which there is imminent danger. We aim to design a control policy for an autonomous robotic platform to track the target state $s$. Let the available robot actions $\A=\{1, 2, \ldots, A\}$ model the priority invested into tracking the target, with $1$ corresponding to greatest priority (e.g, all available resources being dedicated to tracking the target) and $A$ corresponding to lowest effort, i.e.\ the target is not being tracked.

The state transition model is such that the lower the priority invested into tracking, the more likely it is to transition to a state that is considered dangerous.
Conversely, the higher the priority given to tracking, the less likely it is to accidentally enter a dangerous state.
In other words, supposing the current state is $i$, the likelihood of transitioning to state $j>i$ should increase as function of $a$.
State transition matrices $\T^a$ modeling this can be defined e.g. as follows.
Let $\{x_i\}_{i=1}^S$ and $\{y_j\}_{j=1}^{SA}$ be two increasing sequences of real numbers, and define a matrix $A_{ij}=\exp(x_iy_j)$, normalizing it such that each column sums to one, and define $\T^a$ as the submatrix of $A$ containing all rows and columns $j$ for which $1+(a-1)\cdot S \leq j \leq a\cdot S$.
Each $\T^a$ also satisfies the required TP2 condition \cite{Fallat2011}.

The robot has access to $Z=S$ observations providing information about the target state.
The observation matrix $\Ob^a$ for each action $a\in\A$ has elements
\begin{equation}
\label{eq:obsmodel_example}
\Ob^a_{z,j} = \begin{cases}
q & \textnormal{if } z=j\\
(1-q)/2 &\textnormal{if } j\neq 1, i\neq S \textnormal{ and } |z-j|=1\\
(1-q) &\textnormal{if } j=1, i=S \textnormal{ and } |z-j|=1\\
0 & \textnormal{otherwise}
\end{cases},
\end{equation}
i.e.\ with probability $q<1$ the observation identifies the true state, with symmetric error probability in either direction.
This model corresponds to a situation where e.g.\ entering a dangerous state is perceived e.g.\ by the robot being damaged by hazardous environmental conditions.
This class of target tracking domains can be checked to satisfy Assumptions~\ref{ass:model_assumptions}.

\subsection{Minimizing the uncertainty of the target state}
We first consider target tracking with a penalty for uncertain tracker state information.
We set $S=A=Z=3$ and defined the sequences $\{x_i\}=\{1,2,3\}$, $\{y_j\}=\{-9,-8,\ldots,-1\}$.
The observation model was as defined in Eq.~\eqref{eq:obsmodel_example}.
The reward was as in Eq.~\eqref{eq:info_reward}, with
\begin{equation}
r(s,a) = \left[\begin{matrix}
2 & 2.5 & 1\\
1.1 & 1.2 & 0.5\\
0.3 & 2 & 0.2
\end{matrix}\right],
\end{equation}
$w_a = \left[\begin{matrix}1.1 & 1.6 &1\end{matrix}\right]$, and $f$ equal to R{\'e}nyi quadratic entropy.
The discount factor was $\gamma = 0.99$, and the observation accuracy parameter was $q=0.7$.

We computed the lower and upper myopic policy bounds for the problem.
The bounds are visualized on the belief simplex $\B$ in Figure~\ref{fig:policy_bound_example}.
The figure shows on the left and middle the lower and upper myopic policies, and on the right the optimal policy determined when the two bounds agree.

We defined a larger problem with $S=Z=8$ and $A=3$, and in it compared Algorithm~\ref{alg:bb} against an exhaustive tree search.
There are $N_d = (AZ)^d$ reachable belief states after $d$ decisions, and the number of belief states in a complete search tree is $\sum_{i=0}^d N_i$.

\begin{table}[ht]
\centering
\caption{Average percentage of belief states pruned as function of the search depth $d$ applying the myopic policy bounds.}
\label{tab:expansions}
\begin{tabular}{@{}cccccc@{}}
\toprule
        			        & $d=2$ & $d=3$ & $d=4$ & $d=5$ & $d=6$ \\ \midrule
Belief states pruned & 26.3\%  & 36.3\%   & 44.9\%  & 52.3\%  & 58.8\%  \\
\bottomrule
\end{tabular}
\end{table}

Table~\ref{tab:expansions} shows the average percentage of belief state pruned from the search tree for 100 randomly sampled initial belief states.
Computing the myopic policy bound requires solving a feasibility LP with $(S-1)\cdot A$ constraints offline, and then during the search computing for each belief encountered the action maximizing the immediate expected transformed reward.
Empirically, we found that for $d\geq 3$ branch-and-bounding was faster than the exhaustive search measured by computation time.

\subsection{Comparison with existing approaches}
Next, we compare the performance of our approach to an optimal incremental pruning algorithm\footnote{We applied the implementation from the pomdp-solve package of A. Cassandra, see \url{http://pomdp.org}} in a target tracking scenario with a state space size $S$ between $4$ and $256$, and action space size $A$ between $4$ and $64$. 
Since the incremental pruning algorithm (as well as other existing approaches) can handle only state-dependent rewards, we do not add an uncertainty function to the reward, i.e., $f \equiv 0$ in~\eqref{eq:info_reward}.
The state-dependent reward $r(s,a)$ is designed to 1) incur a lower reward for investing higher priority in target tracking and to 2) penalize for both tracking the target poorly or too dangerously.
The penalty on dangerous tracking models a situation where attempting to track the state too aggressively may make the robot vulnerable to environmental hazards.
The reward is set as $r(s,a) = -c_a\cdot (A-a+1) + t(s,a)$, where $c_a$ is the cost of expending one unit of effort in tracking, and $t(s,a)$ is a tracking performance reward, defined
\begin{equation}
t(s,a) = \begin{cases}
-c_p\cdot \frac{1}{s} & \textnormal{if } s \in \St_p\\
-c_d\cdot \frac{1}{S-s+1} & \textnormal{if } s \in \St_d\\
k_r\cdot s & \textnormal{otherwise}
\end{cases}
\end{equation}
where $\St_p$ and $\St_d$ are the regions of the state space corresponding to poor and dangerous tracking, respectively, and $c_p$ and $c_d$ are the respective penalty costs.
If the state is not in $\St_l$ or $\St_h$, a tracking reward proportional to $k_r$ is received.

In our experiments, we set $c_a=\frac{1}{2A}$, $c_p=1$, $c_d=0.1$, and $k_r=\frac{2}{S}$.
Furthermore, we set $\St_l = \{s\in\St\mid s \leq S/5\}$ and $\St_h = \{s\in\St\mid s\geq9S/10\}$, and $\gamma=0.99$.
The observation accuracy parameter was $q=0.8$.

Table~\ref{tab:tracking_pruning} shows the minimum, average and maximum percentage of actions that could be pruned by applying the myopic policy bounds as function of the domain size.
A value of 0\% would indicate that the bounds were completely non-informative, and a value of 100\% that the optimal policy is fully determined by the bounds.
The results were obtained by evaluating the bounds with 500 randomly sampled reachable belief states in each problem.
We note that pruning efficiency tends to decrease both as function of the state space and action space size, but in all cases at least one third pruning rate was achieved even in the worst case.
For $S=256$ and $A=64$, a vector satisfying the constraints of Thm.~\ref{thm:mlr_LP} could not be found.

\begin{table*}[ht]
\centering
\caption{The (minimum, average, maximum) percentage of actions pruned in target tracking domains as function of state space size $S$ and action space size $A$.}
\label{tab:tracking_pruning}
\begin{tabular}{@{}cccccc@{}}
\toprule
$S$   & $A=4$ & $A=8$ & $A=16$ & $A=32$ & $A=64$ \\ \midrule
4   & (75.0, 94.9, 100)    & (75.0, 83.3, 87.5)    & (75.0, 78.7, 87.5)     & (71.9, 74.7, 75.0)     & (64.1, 72.0, 75.0)     \\
8   & (50.0, 75.0, 75)    & (37.5, 59.5, 62.5)    & (50.0, 58.0, 75.0)     & (59.4, 64.9, 71.9)     & (48.4, 57.6, 67.2)    \\ \midrule
16  & (50.0, 72.7, 75)    & (50, 65.2, 75.0)    & (43.8, 61.3, 68.8)     & (46.9, 55.0, 62.5)     & (46.9, 55.5, 62.5)    \\
32  & (50.0, 73.9, 75)    & (37.5, 58.5, 62.5)    & (37.8, 54.9, 62.5)     & (37.5, 49.9, 59.4)     & (43.8, 52.6, 59.4)     \\ \midrule
64  & (50.0, 71.9, 75)    & (37.5, 60.4, 62.5)    & (43.8, 54.0, 62.5)     & (37.5, 49.6, 59.4)     & (40.6, 47.9, 53.1)     \\
128 & (50.0, 73.2, 75)    & (50, 61.1, 62.5)    & (43.8, 54.7, 56.3)     & (40.6, 49.3, 53.1)     & (42.2, 48.3, 53.1)     \\ \midrule
256 & (50.0, 72.8, 75)    & (50, 60.4, 62.5)    & (43.8, 50.9, 56.3)     & (43.8, 49.1, 53.1)     & (n/a)     \\
    \bottomrule
\end{tabular}
\end{table*}

\begin{figure}[thpb]
\centering
\includegraphics[width=\columnwidth]{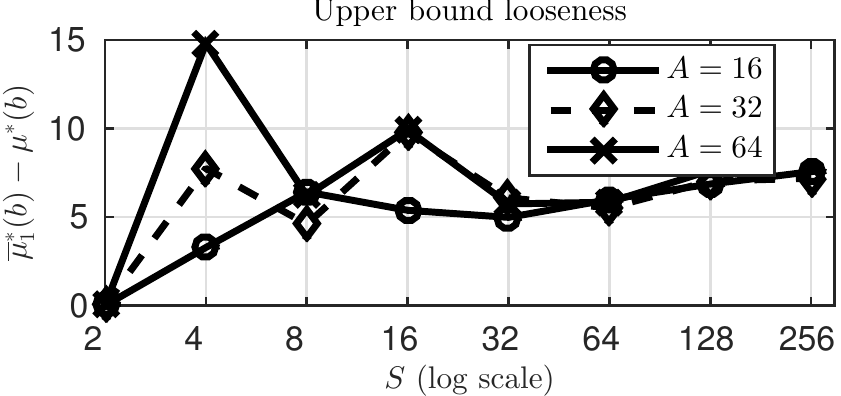}
\caption{The average distance between the upper bound and optimal policy as function of state space size $S$ for various action space sizes $A$.}
\label{fig:looseness}
\end{figure}

As we computed the optimal solution, we could examine how close the bounds are to the optimal policy.
Figure~\ref{fig:looseness} shows an example of the looseness of the upper bound measured by the average of $\overline{\mu}^*_1(b)-\mu^*(b)$ over the 500 belief states.
The greater the difference is, the further away the myopic upper bound policy is from the optimal policy, and the less informative it is.
We note that even as the number of states and actions increases, the upper bound remains at a distance of less than 10 actions from the optimal.

\section{Conclusion}
\label{sec:conclusion}
We examined information acquisition in POMDPs with a reward function nonlinear in the belief state.
We showed that if the POMDP fulfills certain structural properties, the optimal infinite horizon stationary policy may be lower and upper bounded by myopic (greedy) lower and upper bound policies.
Based on the bounds, we designed a branch-and-bound pruning algorithm for online planning in POMDPs, and demonstrated its effectiveness in a target tracking application.

The main advantage of our approach is that, if the structural properties are satisified in the POMDP model, the resulting branch-and-bound algorithm saves orders of magnitude in computation compared to the existing approaches. 
The main drawback is that the requred structural properties can be quite restrictive in some applications. 
For example, a natural order in the state and action spaces is required.

Future work will focus on relaxing the structural requirements while keeping track of the effect on the policy bounds. 
This has potential to widen the scope of the possible applications of our apporach significantly.




\bibliographystyle{IEEEtran}
\bibliography{ref}

\end{document}